\documentclass[]{article}
\usepackage[full]{textcomp}  

\usepackage{amsmath}
\usepackage{amsthm}
\usepackage{amssymb}
\usepackage{multicol}
\usepackage{cancel}
\usepackage{comment}
\usepackage{hyperref}
\usepackage{xfrac}
\usepackage{mathrsfs}
\hypersetup{
	colorlinks=true,
	linkcolor=blue,
	citecolor=red,
	breaklinks=true
}

\usepackage[T1]{fontenc}
\linespread{1.5}
\usepackage[left=1.3in, right=1.3in, top=1.25in, bottom=1.25in]{geometry}
\usepackage{dirtytalk}
\usepackage[small]{titlesec}
\usepackage{mathtools}
\usepackage{comment}
\usepackage{epigraph}

\setlength{\epigraphwidth}{0.45\textwidth}
\setlength{\epigraphrule}{0pt}
\usepackage{fancyhdr}
\pagestyle{fancy}
\fancyhf{}
\fancyhead[L]{\nouppercase\leftmark}
\fancyhead[R]{\thepage}


\usepackage{natbib}
\bibliographystyle{aea}
\setcitestyle{authoryear,open={(},close={)}}


\usepackage{sgame}

\ssualfalse
\gamemathtrue

\usepackage{pgfplots}
\usepackage{tikz}
\usetikzlibrary{positioning}
\pgfplotsset{width = 10cm,compat = 1.9}
\usetikzlibrary{patterns}
\usepackage{array}

\def\t{\textrm}

\newtheorem{theorem}{Theorem}
\newtheorem{lemma}{Lemma}

\newtheorem*{axiom*}{Axiom}

\newtheorem*{theorem*}{Theorem}
\theoremstyle{definition} 
\newtheorem{definition}{Definition}
\newtheorem{example}{Example}


\newcommand{\spbr}{:}

\usepackage{pifont}
\usepackage{fontawesome5}
\usepackage{hyperref}
\usepackage[nameinlink,noabbrev,sort&compress]{cleveref}
\crefname{assumption}{assumption}{assumptions}
\crefname{example}{example}{examples}
\usepackage{algorithm}
\usepackage{algpseudocode}

\usepackage{graphicx}

\makeatletter
\newcommand{\threedots}{%
	\mathrel{\mathpalette\threedots@\relax}%
}
\newcommand{\threedots@}[2]{%
	\sbox\z@{$\m@th#1:$}%
	\vbox to\ht\z@{%
		\hbox{$\m@th#1.$}%
		\vss
		\hbox{$\m@th#1.$}%
		\vss
		\hbox{$\m@th#1.$}%
	}%
}
\makeatother
\usepackage{kpfonts} 
\usepackage{inconsolata}

\newcolumntype{C}{>{$}c<{$}} 
\usepackage{ulem}
\usepackage[multiple]{footmisc}

\newcommand{\stu}{{i}}
\newcommand{\stuj}{j}
\newcommand{\Stu}{I}

\newcommand{\sch}{{s}}

\newcommand{\Sch}{{S}}
\newcommand{\quo}{q}

\newcommand{\spref}{\succ}
\newcommand{\wpref}{\succsim}

\newcommand{\swthn}{\succ^*}
\newcommand{\wwthn}{\succsim^*}

\newcommand{\sbtwn}{\succ}
\newcommand{\wbtwn}{\succsim}

\newcommand{\ebtwn}{\sim}

\newcommand{\age}{a}

\newcommand{\bge}{b}

\newcommand{\coal}{a}
\newcommand{\Coal}{A}

\newcommand{\mat}{\mu}

\newcommand{\nat}{\nu}
\newcommand{\matinv}{\mat^{-1}}

\newcommand{\natinv}{\nat^{-1}}

\newcommand{\matone}{\mat_1}
\newcommand{\mattwo}{\mat_2}

\newcommand{\matoneinv}{\matone^{-1}}
\newcommand{\mattwoinv}{\mattwo^{-1}}

\newcommand{\Upp}{U}

\newcommand{\st}{\t{ s.t. }}
\newcommand{\Topt}{T}

\title{A Unified Theory of School Choice}
\author{Peter Doe}
\date{\today}

\begin{document}
	
	\maketitle
	
	\begin{abstract}
		In school choice, policymakers consolidate a district's objectives for a school into a priority ordering over students.
		They then face a trade-off between respecting these priorities and assigning students to more-preferred schools.
		However, because priorities are the amalgamation of multiple policy goals, some may be more flexible than others.
		This paper introduces a model that distinguishes between two types of priority: a between-group priority that ranks groups of students and must be respected, and a within-group priority for efficiently allocating seats within each group.
		The solution I introduce, the unified core, integrates both types.
		I provide a two-stage algorithm, the DA-TTC, that implements the unified core and generalizes both the Deferred Acceptance and Top Trading Cycles algorithms.
		This approach provides a method for improving efficiency in school choice while honoring policymakers' objectives.
	\end{abstract}
	
	\section{Introduction}\label{sec:intro}
	School choice programs\footnote{What economists refer to as \textit{school choice}---the ability to choose between public schools---is frequently called \say{open enrollment} in the media. Although, in common parlance, \say{school choice} refers to voucher or tax credit programs for defraying the costs of private schooling or homeschooling, I will use the standard language in economics.} allocate school seats using \textit{priorities} (such as sibling status or proximity): when a school is oversubscribed, these priorities determine which students are admitted.
	The interpretation of these priorities is fundamental to the program's design, and there are two leading interpretations.
	The stronger interpretation posits that priorities are the right to attend the school ahead of those with lower priorities.
	In the weaker interpretation, priorities reflect \say{better opportunities} to attend a school, all else equal \citep{abdulkadiroglu_school_2003}.
	
	These interpretations underpin the central trade-off in school choice between fairness and efficiency.
	\textit{Fairness} means no student is denied admission to a school in favor of a lower-priority student; \textit{efficiency} means students are assigned to schools in such a way that no student can be improved without harming another student.
	By requiring the stronger interpretation, the designer arrives at a {fair} match because no student prefers a school that a lower-priority student is assigned to.
	However, allowing for the weaker interpretation shifts the set of permissible assignments to include an {efficient} match.
	The main dilemma is that no algorithm delivers a match that is both fair and efficient.
	
	
	Despite the appeal of efficiency, policymakers consistently choose fairness.\footnote{New Orleans Recovery School District and the Common App is the only example I know of that sought an efficient outcome in some rounds of the assignment process, but this was abandoned after the first year.}
	Empirically, the cost is substantial: for instance, \cite{abdulkadiroglu_strategy-proofness_2009} finds that the assignments of 4300 eighth-grade students in New York City could have been improved without harming any other students.
	In response to this, recent research focuses on weaker versions of fairness to allow for efficiency gains. 
	
	However, an examination of why policymakers side with fairness reveals that they are frequently concerned with the violations of some priorities but not with others.
	For instance, a priority derived from a high test score may have a stronger interpretation than a priority assigned based on proximity to the school.
	Because economists have not considered how policymakers may treat some priorities differently than others, this omission has precluded a promising approach to reconciling fairness and efficiency.
	The key to understanding how policymakers view priorities lies in the two-step process that is typically used to construct priorities.
	
	In the first step, the policymaker identifies a set of student characteristics to prioritize and uses these characteristics to partition students into priority groups.\footnote{For example, these characteristics frequently include whether the student has a sibling attending the school, whether the student is within a particular geographic region, the student's test scores, etc., and a priority group consists of the students with the same set of characteristics.}
	Then, the policymaker assigns a priority over the priority groups that reflects the policymaker's objectives for the school; this is the \textit{between-group} priority.
	In the second step, within each group, another \textit{within-group} priority is provided to break ties, which could depend upon other student characteristics or a random lottery. 
	A student's priority at a school is lexicographically determined first by her between-group priority and then by her within-group priority.
	
	Policymakers understand that because priorities arise from different sources, they require different interpretations.
	Allocating a seat to a student in a lower priority group ahead of a student in a higher priority group may be unacceptable, but assignments within groups may be more subtle.
	The prominent case of Boston Public Schools (BPS)---the first district to carefully consider its assignment mechanism---illustrates this difference.
	
	BPS created two priority groups: students with a sibling attending the school and all other students, with the former given higher between-group priority than the latter.
	Within each priority group, BPS gave higher within-group priority to students residing in the school's walk zone, and broke any remaining ties with a random lottery.
	When comparing the efficient algorithm with the fair algorithm, Superintendent Payzant implies that the between-priority derived from having a sibling present should be treated differently from the within-priorities:
	\begin{quote}
		[The efficient system] presents the opportunity for the priority of one student at a given school to be \say{traded} for the priority of a student at another school, [\ldots] There may be advantages to this approach, [\ldots] It may be argued, however, that certain priorities -- e.g., sibling priority -- apply only to students for particular schools and should not be traded away. \citep{abdulkadiroglu_changing_2006}
	\end{quote}
	The implication from his silence concerning walk-zone priority is that, in some cases, walk zone priorities \textit{could} be traded away.
	Sibling priority has the stronger interpretation while walk zone priority has the weaker interpretation of a better opportunity to attend the school.\footnote{Readers familiar with this story will note that my exposition of the priority groups is slightly different than that of \cite{abdulkadiroglu_changing_2006}. Formally, BPS designated five priority groups: continuing students, sibling-walk, sibling, walk, and all others. However, as Payzant's quote illustrates, sibling priority and walk zone priority are normatively different. Conveniently, the order on the priority groups allows for walk zone priority to be viewed as a tiebreaker within the sibling priority group. If BPS ranked these groups differently, my model would not apply directly. For ease of exposition, I exclude the continuing students from the model; continuing students can be easily added to the model by including an additional priority group.}
	
	Despite the different interpretation of priorities within these two steps, economists usually remain agnostic about the source of the priority.
	The nuances between priority groups are dismissed, and all priorities are treated equally.
	The existing options afforded to policymakers do not allow them to express these nuances.
	This leaves policymakers in a quandary because a more efficient match is better than not, but violating the stronger priorities may be impermissible.
	When BPS faced this decision, they erred on the side of respecting priorities and implemented the fair mechanism even though only a subset of their priorities required the stronger interpretation.
	
	In this paper, I introduce a model that explicitly distinguishes between two types of priority---between-group and within-group---and I propose the unified core to connect the two.
	In my model, each school's priority consists of two layers.
	The first layer is the between-group priority, a weak preference that specifies which students belong to which priority group and how those groups are ordered.
	The second layer is the within-group priority, a strict preference that refines the between-group priority.\footnote{In most applications, the within-group priority consists of a random lottery. However, I refrain from referring to it as a \textit{tiebreaker} because the within-group priority may include substantive components, such as geographic location (as in the case of BPS). The emphasis, however, is that these priorities have a weaker interpretation.}
	Accordingly, only the between-group priority is interpreted as a right to attend the school ahead of lower-priority students, and it cannot be violated.
	On the other hand, the within-group priority has the milder interpretation of an opportunity to attend the school; it is a means of allocating the school seat, but it is not inviolable.
	The interpretation of the within-group priority is weak enough to guarantee efficiency within a group while still using the priority to allocate the seat.
	
	The unified core connects the two cores that underlie the fair and efficient algorithms, and is implemented by carefully combining the algorithms that implement the fair and efficient cores.
	The core underlying the fair algorithm is the \textit{fair} core of \cite{gale_college_1962} (henceforth GS), which allows a student to claim a seat at a school if she has a higher priority than a student currently there.
	This stems from the strong interpretation of priorities and treats the school as a strategic player and the priority as the school's preference.
	The result is a fair match because any unfair match can be blocked by the student whose priority is violated.
	
	The core underlying the efficient algorithm is from \cite{rong_core_2022} (henceforth RTZ),\footnote{It is worth mentioning here that RTZ's core identifies precisely the \textit{just} assignments of \cite{morrill_making_2015}.} who develop their \textit{efficient} core to provide a foundation for the use of several efficient algorithms in school choice.
	RTZ's efficient core allows a student to take possession of a school provided no higher-priority students can veto that action.
	In effect, the efficient core allows for a form of tradable priorities, where a low-priority student can receive a priority from another student so long as no higher-priority students are harmed.
	As is paralleled in many economic models, the power to trade freely (in this case, trade school seats) guarantees that an efficient allocation is reached.
	
	Whether policymakers prefer the outcomes of the fair core or the efficient core ultimately depends on the source of the students' priorities.
	If the priority reflects different priority groups, then the fair core is appealing; if it reflects within-group tiebreakers such as a random lottery, the efficient core allows more students to attend the schools that they prefer.
	However, neither the fair core nor the efficient core allows for combinations of the two priorities.
	
	The unified core resolves this tension and connects the two cores by allowing the policymaker to explicitly designate two kinds of priorities.
	By treating the between-group priorities as in the fair core while the within-group priorities as in the efficient core, the unified core generalizes both cores.
	Students can always claim a seat at a school if they have a higher between-group priority than a currently assigned student, as in GS.
	Within-group priorities, however, may be traded amongst students within the same priority class, as in RTZ.
	This guarantees a form of \say{within-group} efficiency.
	The way I implement this mirrors RTZ: a student can take a seat provided no within-group interrupters in her priority group can block her.
	
	The main challenge in crafting a solution for this model is integrating the differing normative implications of the two priorities.
	Unlike the model of \cite{erdil_whats_2008} (henceforth EE),\footnote{I discuss the differences in models in \Cref{sub:lit}, and the differences in our techniques in \Cref{sec:disc}. In short, EE uses only between-group priorities; however, I must use more care because of the within-group priorities. This prevents a direct application of the Stable Improvement Cycles algorithm.} the within-group priorities are not merely random tiebreakers.
	Within-group priority may encode substantive differences (as in the case of BPS) that must be respected, just not in the same fashion as the between-group priority.
	The difficulty lies in creating a unified model that allows for both interpretations, one between groups and the other within groups.
	
	The power of the unified core lies in its ability to interpolate between the frameworks of GS and RTZ.
	When the between-group priority is a strict ranking where each student forms a singleton priority group, the unified core corresponds to the fair core of GS.
	On the other hand, when the between-group priority is indifferent over all students (there is a single priority group), then the unified core corresponds to the efficient core of RTZ.
	At intermediate stages, the unified core blends both RTZ and GS in a principled manner. 
	
	To make the unified core practical, I introduce a novel two stage algorithm---the DA-TTC algorithm---that always produces a match in the unified core.
	In the first stage, I allow students to freely apply to their most-preferred schools as in the Deferred Acceptance algorithm of GS.
	Schools tentatively accept the highest-priority applicants and immediately reject the rest.
	The process continues until no more students are rejected, at which point the first stage terminates and returns the match $\matone$.
	
	In the second stage, I adapt Gale's Top Trading Cycles algorithm but with a restriction on the allowable trades \citep{shapley_cores_1974}.
	Essentially, students may only trade their seats (that they have from $\matone$) to other students in the same priority group.
	I iteratively find groups of students who can, by trading their assigned schools among themselves, be matched to their most-preferred school out of the ones currently available.
	This algorithm is the crux of this paper, and is presented in \Cref{sec:subTTC}.
	I label the resulting match $\mattwo$.
	Once all students have been removed, the second stage terminates and returns the match $\mattwo$.
	
	My main result establishes that $\mattwo$ belongs to the unified core.
	The advantage of my approach is a practical algorithm to finding an element of the unified core.
	The DA-TTC is also computationally feasible.\footnote{It runs in polynomial time.}
	
	A key advantage of the unified core in applications is the simplicity of its definition.\footnote{Indeed, Payzant expressed reservation concerning the TTC because of the complexity of the algorithm. I return to this topic in the discussion.}
	First, there are no between-group priority violations, so the fairness property of the fair core is preserved.
	However, when there is a within-group priority violation, the explanation is straightforward: rectifying the violation would harm a student with higher within-group priority.
	By focusing on a reasonable outcome, the definition of the unified core lends itself to applications in school choice.
	
	The rest of this paper is structured as follows.
	First, I close out this section with a discussion of the relevant papers, emphasizing the differences between the unified core, related concepts, and the methods used.
	\Cref{sec:mod} introduces the model, the bulk of which is devoted to the distinction between within-group and between-group priorities.
	In \Cref{sec:analysis}, I develop my main result (\Cref{thm:wthn-block}) which shows that a match in the unified core can always be found using an efficient algorithm.
	I conclude in \Cref{sec:disc} with a discussion of the result and further avenues of research.
	
	\subsection{Related Literature}\label{sub:lit}
	This paper connects several strands of research.
	
	First, as mentioned previously, the model in this paper has substantial overlap with \cite{erdil_whats_2008} (henceforth, EE).
	The key difference between my model and EE's model is that I include the within-group priorities in the model primitives, while they take these priorities as randomly generated from the between-group priorities.
	Although mathematically similar, because the within-group priorities may carry normative implications, the outcomes in the unified core are restricted.
	In EE, any student within the same priority group can be assigned to the school so long as the match is Pareto optimal among the matches that are stable with respect to the between-group priorities.
	In contrast, in the unified core, a student of higher within-group priority can claim a seat so long as no higher priority students in the same group are harmed.
	The within-group priority disciplines how seats are allocated within priority groups.
		
	Second, ever since \cite{abdulkadiroglu_school_2003}, researchers have been trying to modify the Deferred Acceptance algorithm to accommodate efficiency gains \citep{ehlers_illegal_2020,kesten_school_2010,troyan_essentially_2020,reny_efficient_2022}.
	These approaches, however, take the priorities as given and weaken the definition of the core.
	The unified core complements this approach by weakening the priorities and then adapting the core to this framework.
	The upshot of the unified core is that the definition of the core is simpler, but the downside is that the preferences must have a two-stage structure to make any efficiency gains.
	
	Third, a separate approach has been to provide foundations for the use of the Top Trading Cycles algorithm in school choice \citep{abdulkadiroglu_role_2010, morrill_alternative_2013, abdulkadiroglu_minimizing_2017,chen_new_2021,rong_core_2022, dur_characterization_2024}.
	These axiomatic definitions ground this work, and I extend our understanding of these mechanisms by integrating them into the fairness framework of GS.
	
	The paper closest in spirit to mine is \cite{abdulkadirog_generalized_2011}.
	The main difference between his paper and mine is the restriction they place on between-priorities: in their model, between-group priorities either express indifference or a strict preference over all students.
	Additionally, his focus is on student optimality rather than on the core.
	He uses the within-priorities in a similar way, but only in his algorithm; the unified core provides the underpinnings for his solution.
	
	\section{Model}\label{sec:mod}
	In this section I present the formal model of a \textit{school choice problem} and my key definitions.
	In the first subsection, I introduce the standard primitives, except that I replace a school's one priority with two priorities: the between-group priority and within-group priority.
	In the second subsection, I develop the theory of the unified core.
	In the third subsection, I turn to the solutions of GS, RTZ, and EE, and highlight the differences and advantages of the unified core.
	
	\subsection{The Primitives}
	There is a finite set of \textit{students} $\Stu$ and a finite set of \textit{schools} $\Sch$, collectively called \textit{agents}.
	Each school has a \textit{capacity} $\quo_\sch \geq 1$ of seats.
	A \textit{match} is a function $\mat:\Stu \rightarrow \Stu \cup \Sch$ with the following two properties:
	\begin{enumerate}
		\item if $\stu \in \Stu$, then $\mat(\stu) \in \Sch \cup \{\stu\}$; and
		\item if $\sch \in \Sch$, then $|\matinv(\sch)| \leq \quo_\sch$.
	\end{enumerate}
	The first requirement guarantees that the market is two-sided: students are matched to schools or unmatched (this is denoted as being matched to oneself).
	The second requirement guarantees that capacities are not exceeded.
	
	Every student has a strict preference $\spref_\stu$ over $\Sch \cup \{\stu\}$; the associated weak preference is $\wpref_\stu$.
	A match $\mat$ is \textit{individually rational} if for every $\stu \in \Stu$, $\mat(\stu) \wpref_\stu \stu$.
	When considering a set of students $\Coal \subseteq \Stu$, I say that $\Coal$ \textit{prefers} $\nat$ to $\mat$ if for every $\stu\in \Coal$, $\nat(\stu) \wpref_\stu \mat(\stu)$ and the comparison is strict for some student in $\Coal$.
	
	Schools, however, come equipped with two orders over $\Stu$, the between-group priority and the within-group priority.\footnote{Implicit here is the assumption that every student is acceptable to every school. Extending this model to incorporate unacceptable students is straightforward, but adds unnecessary notation.}
	The \textit{between-group priority} order is a weak preference $\wbtwn_\sch$.\footnote{The irreflexive portion is $\sbtwn_\sch$ (the strict preference relation) and the reflexive portion is $\ebtwn_\sch$ (the indifference relation).}
	It represents the priorities that require the stronger interpretation as a right to attend the school ahead of others.
	When $\wbtwn_\sch$ does not rank two students strictly, it means that they are in the same priority group.
	Formally, denote the \textit{priority group} of student $\stu$ at $\sch$ as
	\begin{align*}
		[\age]_\sch \equiv \{\stuj \in \Stu \spbr \stuj \ebtwn_\sch \stu\}
	\end{align*}
	
	The \textit{within-group priority} order is a strict preference $\swthn_\sch$.
	It represents the priorities that are allowed the weaker interpretation of a better opportunity to attend the school.
	In practice, the priority that a policymaker uses is the lexicographic combination of $\wbtwn_\sch$ and $\swthn_\sch$: first, $\wbtwn_\sch$ is used to partition students into groups, and then $\swthn_\sch$ is used to break ties within groups.
	Rather than define a {combined} priority, I assume that $\swthn_\sch$ is a {refinement} of $\wbtwn_\sch$; that is, if $\age \sbtwn_\sch \bge$, then $\age \swthn_\sch \bge$.
	That is, $\swthn_\sch$ is what most researchers refer to as school priority, whereas for me it is the result of the policymaker breaking ties within priority groups.
	
	Because this model is many-to-one, it is necessary to construct the school's between-group priority over groups of students.\footnote{Extending the within-group priority in a similar manner is unnecessary in my analysis.}
	Toward this end, I make the standard assumption that the school's between-group priority over groups of students is \textit{responsive} \citep{roth_college_1985}.
	In words, this means that for equally-sized groups, the first has greater between-group priority than the second if every student in the first can be paired with a student in the second such that the first has greater between-group priority.
	For unevenly sized groups, the smaller group never has greater between-group priority than the larger group, but the larger group has greater between-group priority if there is a subset of the larger group that is the same size as the smaller group and has greater between-group priority.
	
	Formally, I construct $\wbtwn_\sch$ for groups of students in the following way.
	For $\Coal, \Coal' \subseteq \Stu$ such that $|\Coal| = |\Coal ' |$, I write $\Coal \wbtwn_\sch \Coal ' $ if the students in $\Coal$ and $\Coal' $ can be indexed such that $\coal_1 \wbtwn_\sch \coal_1'$, $\coal_2 \wbtwn_\sch \coal_2'$, \ldots, $\coal_{|\Coal|} \wbtwn_\sch \coal_{|\Coal|}'$.
	When $\Coal \wbtwn_\sch \Coal ' $ and $\Coal ' \wbtwn_\sch \Coal $, I write $\Coal \ebtwn_\sch \Coal ' $.\footnote{Note that $\wbtwn_\sch$ may not be a complete relation over groups of students: there are some coalitions $\Coal$ and $\Coal ' $ such that neither $\Coal \wbtwn_\sch \Coal ' $ nor $\Coal ' \wbtwn_\sch \Coal $.}
	If $\Coal \wbtwn_\sch \Coal ' $ but not $\Coal ' \wbtwn_\sch \Coal $, then I write $\Coal \sbtwn_\sch \Coal ' $.
	When $|\Coal| \neq |\Coal ' |$, I write $\Coal \wbtwn_\sch \Coal ' $ if $|\Coal| > |\Coal '| $ and there is some $\tilde\Coal \subseteq \Coal '$ such that $\tilde\Coal \wbtwn_\sch \Coal ' $.

	\subsection{Unified Core}
	With the primitives in hand, I turn to a discussion of the core of the model.
	As introduced earlier, policymakers view between-group and within-group priorities differently.
	To accommodate both types of priorities, I introduce two separate types of blocks.
	In both cases the conditions on student preferences are the same: the coalition of students must prefer the new match to the old one.
	What changes is the definition of what the coalition can enforce.
	
	The between-group priorities resemble the preferences in the GS two-sided model.
	However, within my framework I focus on students as the active players.
	To accommodate this, \Cref{def:btwn-enfo} allows a student to block a match if her between-group priority has been violated.
	This definition mirrors the standard blocking definition of GS.
	It rules out any violations of the between-group priority order.
	\begin{definition}\label{def:btwn-enfo}
		The coalition $\Coal$ can \textit{between-group enforce} match $\nat$ over match $\mat$ if for every $\stu\in \Coal$, either there is a student $\stuj\in \matinv(\nat(\stu))$ such that $\stu \sbtwn_\sch \stuj$ or $|\matinv(\nat(\stu))| < \quo_{\nat(\stu)}$.\footnote{Traditionally in GS-style models, the schools are included in the coalition; however, in RTZ-style models (see next paragraph), this is not the case. I present this student-only version for use here. The downside is that the definition of enforcement relies on $\mat$; the upside is fewer cases in the definition of the unified core.}
	\end{definition}
	The definition of blocking is standard: a coalition can \textit{between-group block} match $\mat$ if it can between-group enforce some match $\nat$ that the coalition prefers to $\mat$.
	Note that if a match is not individually rational, then it is between-group blocked by some student who wishes to be unmatched.
	
	Between-group blocks are useful in school choice models, but do not fully model the priorities given by schools.
	As seen in EE, using between-group blocks alone can provide substantial welfare improvements for students.
	However, the within-priorities may also carry normative value as Payzant implied.
	To address this, I turn to a treatment of within-group priorities.
	
	The within-group priorities are modeled as in the RTZ framework.
	In their model, a coalition of students can {enforce} a match to some schools only if no group of students \textit{outside} the coalition could veto such a match by asserting their stronger claims.
	I import this definition into my model by restricting attention only to students within the same priority group.
	To formalize this, I first define the \textit{within-group upper contour set}:
	\begin{align*}
		\Upp_\sch(\stu) &= \{\stuj \in [\stu]_\sch \spbr \stuj \wwthn_\sch \stu\}
	\end{align*}
	In words, $\stuj$ is in $\Upp_\sch(\stu)$ if $\stuj$ has a higher within-group priority at $\sch$.
	When considering whether $\stu$ can claim a seat at $\sch$, $\Upp_\sch(\stu)$ are the students who could prevent this match because they have a stronger within-group claim to $\sch$.
	For example, suppose $\quo_\sch = 1$; if $\stu$ wishes to match to $\sch$, then $\stu$ must guarantee that students in $\Upp_\sch(\stu)$ are not harmed by this action.
	I call students in $\Upp_\sch(\stu)$ but not in $\Coal$ the \textit{within-group interrupters} of $\stu$.
	A sufficient condition is that there are no within-group interrupters.
	When $\quo_\sch > 1$, it is not necessary to eliminate every within-group interrupter: if $n$ students from $\Coal$ match to $\sch$, then $\quo_\sch - n$ within-group interrupters do not need to be included because the within-group interrupters (even when acting together) are unable to prevent the coalition members from claiming their seats at the school.\footnote{The reader may wonder why I (and RTZ) include this slackness in the definition of enforceability. Without it, a student with top priority might trade \textit{multiple} seats away. The intuition is that, if every within-group interrupter is included, then including the highest-priority student might necessitate (through a chain of enforceability claims) including two or more lower-priority students who are matched to the school. An example demonstrating this is available upon request.
	}
	\Cref{def:wthn-enfo} formalizes this intuition.
	\begin{definition}\label{def:wthn-enfo}
		A coalition $\Coal \subseteq \Stu$ can \textit{within-group enforce} match $\nat$ over match $\mat$ if for every school $\sch \in \nat(\Coal)$, the following two conditions hold:
		\begin{enumerate}
			\item $\natinv(\sch) \ebtwn_\sch \matinv(\sch)$
			\item the sum of the number of within-group interrupters across all students in $\Coal$ that are matched to $\sch$ is less than or equal to $\quo_\sch - |\natinv(\sch)|$.
		\end{enumerate}
	\end{definition}
	Again, the definition of blocking is symmetric: a coalition can \textit{within-group block} match $\mat$ if it can within-group enforce some match $\nat$ that the coalition prefers to $\mat$.
	
	The main difference between within-group blocks and between-group blocks is their treatment of schools.
	Schools must strictly benefit (according to the between-priority) from between-group blocks, but may be indifferent in a within-group block.
	The within-group block compensates by placing a stronger condition upon the student coalition.
	
	With both types of blocks in hand, I turn to defining the unified core in \cref{def:unicore}.
	\begin{definition}\label{def:unicore}
		$\mat$ is in the \textit{unified core} if it is neither between-group blocked nor within-group blocked.
	\end{definition}
	
	In \Cref{sec:analysis}, I provide a two-stage algorithm that always finds a match in the unified core.
	Before introducing the algorithm, I first discuss the two most-closely related solutions.
	
	\subsection{Comparison with Related Models}
	In this subsection, I briefly introduce the models of GS and RTZ.

	In the fairness framework of GS, a match $\mat$ is \textit{fair} if there is no student $\stu$ and $\sch$ such that $\sch \spref_\stu \mat(\stu)$ and either there is a $\stuj \in \matinv(\sch)$ such that $\stu \swthn_\sch \stuj$ or $|\matinv(\sch)| < \quo_\sch$.
	Note that when the between-group priority orders all students strictly, then the set of fair matches and the unified core are the same.
	
	In contrast, the efficiency framework of RTZ uses priorities to allocate school seats as if they were objects.
	The within-group priority represents a form of ownership of the school and allows for a form of trading.
	The key difference between the efficient core and the unified core is that the efficient core assumes that there is a single priority group: any group of students can use their priorities to trade their schools.
	This is made by a slight change in definitions: the \textit{efficient upper contour set} is 
	\begin{align*}
		\Upp^*_\sch(\stu) = \{\stuj \in \Stu \spbr \stuj \swthn_\sch \stu \}
	\end{align*}
	and for a coalition $\Coal$ and match $\nat$, the \textit{efficient interrupters} of $\stu \in \Coal$ are the students in $\Upp^*_{\nat(\stu)}(\stu)$ but not $\Coal$.
	The definition of \textit{efficient enforcement} is then the same as within-group enforcement except that the condition $\natinv(\sch) \ebtwn_\sch \matinv(\sch)$ is dropped and the efficient interrupters are counted instead of within-group interrupters.
	The definition of efficient blocking and the efficient core is similar.
	When the between-group priority places all students within the same priority group, the efficient core and the unified core are the same.
	
	The power of the unified core is its ability to interpolate between the efficient and fair cores.
	By allowing for the between-group priority to range between a strict ranking of all students and complete indifference, the unified core captures the nuances of policymakers' objectives and the two-stage nature of priorities.
	In \cref{exa:interpolate}, I show how the unified core models what the fair and efficient cores cannot.
	
	\begin{example}\label{exa:interpolate}
		Consider a school choice problem with six students and six schools, each school having one seat.
		The students' preferences and the schools' between- and within-priority orders are as follows (unlisted preferences/priorities are irrelevant):
		\begin{center}
			\begin{tabular}{C | C | C | C | C | C}
				\spref_1 & \spref_2 & \spref_3 & \spref_{1'} & \spref_{2'} & \spref_{3'} \\
				\hline
				a & b & a & a' & b' & a'\\
				b & a & c  & b' & a' & c' \\
				\threedots & \threedots & \threedots  &\threedots &\threedots & \threedots\\
			\end{tabular}
			\\ \vspace{5mm}
			\begin{multicols}{2}
				\begin{tabular}{C | C | C | C | C  | C }
					\sbtwn_a & \sbtwn_b & \sbtwn_c& \sbtwn_{a'}& \sbtwn_{b'} & \sbtwn_{c'} \\
					\hline
					1,2,3 & 1,2  & 3 & 2 ' & 1 ' & 3 ' \\
					\threedots & \threedots &  \threedots & 3' & 2' & \threedots \\
					&  &   & 1 '  &  \threedots & \\
					&  &   & \threedots  &  & \\
				\end{tabular}
				\\
				\begin{tabular}{C | C | C | C | C  | C }
					\swthn_a & \swthn_b & \swthn_c& \swthn_{a'}& \swthn_{b'} & \swthn_{c'} \\
					\hline
					2 & 1  & 3 & 2 ' & 1 ' & 3 ' \\
					3 & 2 &  \threedots & 3' & 2' & \threedots \\
					1 & \threedots &   & 1 '  &  \threedots & \\
					\threedots&  &   & \threedots  &  & \\
				\end{tabular}
			\end{multicols}
		\end{center}
		Notice that this example consists of two parallel problems side-by-side.
		In the first, the students $1$ and $2$ are the highest-priority students at the school the other most-prefers (schools $a$ and $b$, respectively), and in the second, students $1'$ and $2'$ are the highest-priority at the school the other most-prefers (schools $a'$ and $b'$, respectively).
		The fair core is the match:
		\begin{align*}
			\mat^{\t{GS}}&= 
			\begin{pmatrix}
				1 & 2 & 3  & 1' & 2' &3'\\
				b & a  & c & b'& a'& c'
			\end{pmatrix}
		\end{align*}
		The fair core does not incorporate the difference between the between-group and within-group priorities; hence it is less efficient than allowed.
		However, the efficient core is also unsatisfactory:
		\begin{align*}
			\mat^{\t{RTZ}}&= 
			\begin{pmatrix}
				1 & 2 & 3  & 1' & 2' &3'\\
				a & b  & c & a'& b'& c'
			\end{pmatrix}
		\end{align*}
		The difficulty with RTZ is that it does not recognize that student $3'$ can claim a seat at school $a'$ ahead of student $1'$.
		
		The unified core, however, correctly identifies the difference between these priority groups:
		\begin{align*}
			\mat^{\t{UC}}&= 
			\begin{pmatrix}
				1 & 2 & 3  & 1' & 2' &3'\\
				a & b  & c & b'& a'& c'
			\end{pmatrix}
		\end{align*}
	\end{example}

	\section{Analysis}\label{sec:analysis}
	In this section, I present my main result (\Cref{thm:wthn-block}).
	It states that the unified core is non-empty for every school choice problem and provides an algorithm for finding one such match.
	
	I use a two-stage algorithm to prove the result.
	In the first stage, I leverage the Deferred Acceptance algorithm (DA) to produce a match that is not between-group blocked \citep{gale_college_1962}.
	In the second stage, I build a variation of the Top Trading Cycles algorithm (TTC) to remove any within-group blocks from the previous match while not adding any between-group blocks \citep{shapley_cores_1974}.
	The match produced is in the unified core.
	
	\subsection{Removing between-group blocks: The Deferred Acceptance Algorithm}
	In this subsection, I present the first stage of the DA-TTC: the {Deferred Acceptance} algorithm from GS.
	In the DA, students initially start unmatched.
	In the first round, students \say{propose} to their favorite school.
	Every school tentatively accepts students up to its capacity (picking the highest according to its within-group priority) and immediately rejects the rest.
	In subsequent rounds, students who are not tentatively accepted propose to their favorite school among those that have not rejected them.
	Every school considers the new proposals simultaneously with the students it has tentatively accepted, and again rejects all but the top students up to its capacity.
	This process continues until every student is matched or has been rejected by every acceptable school.
	I formally write this in \Cref{alg:DA}.
	
	\begin{algorithm}
		\caption{Deferred Acceptance Algorithm}\label{alg:DA}
		\begin{algorithmic}
			\State every student $\stu$ points to the $\wpref_\stu$-highest agent
			\While{more than $\quo_\sch$ students point to some school $\sch$}
			\State \quad  $\sch$ rejects all but the $\swthn_\sch$-top $\quo_\sch$ students who proposed to $\sch$
			\State \quad every student $\stu$ points to the $\wpref_\stu$-highest agent \textit{who has not rejected him yet}
			\EndWhile
			\State $\matone$ assigns each student to the agent he last proposed to
		\end{algorithmic}
	\end{algorithm}
	
	The DA was originally designed by GS to construct a fair match.
	Fairness is a stronger requirement than not being between-group blocked, this result applies to my model.
	\Cref{lem:btwn-block} translates the standard result into my model.
	The proof is standard and is relegated to the Appendix.
	\begin{lemma}\label{lem:btwn-block}
		$\matone$ is not between-group blocked by any coalition.
	\end{lemma}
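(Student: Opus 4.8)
The plan is to reduce the claim to the classical fact that the Deferred Acceptance algorithm returns a \emph{fair} match --- fair with respect to the strict within-group priority $\swthn_\sch$ --- and then to use the hypothesis that $\swthn_\sch$ refines $\wbtwn_\sch$. The idea is that a between-group block, evaluated at one of its strictly-better-off members, would exhibit at a preferred school either a seat left vacant or a strict \emph{between-group} priority violation; the first cannot occur at a fair match, and the second is a special case of a \emph{within-group} priority violation, so it cannot occur either. Thus there is essentially nothing to do beyond transcribing the standard Gale--Shapley fairness argument into the present notation and inserting one refinement step.

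Concretely, I would carry this out in the following order. First, record that $\matone$ is individually rational: in \Cref{alg:DA} a student never points to a school he ranks below himself, so $\matone(\stu)\wpref_\stu\stu$ for every $\stu$; hence if $\stu$ is a strictly-better-off member of a blocking coalition, $\nat(\stu)$ is an actual school. Second, establish the usual monotonicity invariant of DA: once a school $\sch$ holds $\quo_\sch$ proposers it stays full until termination, and the set of students $\sch$ tentatively holds can only improve in the $\swthn_\sch$ order, since a held student can be dropped only when enough strictly $\swthn_\sch$-higher students propose. From this I extract two facts: (i) if $|\matoneinv(\sch)|<\quo_\sch$ then $\sch$ never filled and so rejected no one; (ii) if $\sch$ ever rejected a student $\stu$, then at termination $|\matoneinv(\sch)|=\quo_\sch$ and $\stuj\swthn_\sch\stu$ for every $\stuj\in\matoneinv(\sch)$. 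Third, observe that since students propose down their preference lists and never revisit a school that has rejected them, any $\stu$ and $\sch$ with $\sch\spref_\stu\matone(\stu)$ satisfy: $\stu$ proposed to, and was rejected by, $\sch$. Combined with (i)--(ii), this says precisely that $\matone$ is fair with respect to $\swthn_\sch$.

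Finally, suppose for contradiction that a coalition $\Coal$ between-group blocks $\matone$ via some $\nat$, and pick $\stu\in\Coal$ with $\nat(\stu)\spref_\stu\matone(\stu)$; set $\sch=\nat(\stu)\in\Sch$. By \Cref{def:btwn-enfo}, either $|\matoneinv(\sch)|<\quo_\sch$, which contradicts fairness outright, or there is $\stuj\in\matoneinv(\sch)$ with $\stu\sbtwn_\sch\stuj$ --- but since $\swthn_\sch$ refines $\wbtwn_\sch$ this forces $\stu\swthn_\sch\stuj$, so $\stuj$, being matched to $\sch$ under $\matone$, witnesses a within-group priority violation at $(\stu,\sch)$, again contradicting fairness. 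Hence no such $\Coal$ exists. The only delicate point is keeping the direction of the refinement straight in this last step --- a between-group priority violation is always a within-group priority violation, never the reverse; everything else is the textbook DA argument, which is why the detailed verification can be deferred to the Appendix.
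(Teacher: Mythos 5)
Your proposal is correct and follows essentially the same route as the paper's proof: both reduce the claim to the standard fact that DA is fair with respect to $\swthn_\sch$ (capacity-filled schools plus no rejected student being displaced by a $\swthn_\sch$-lower one) and then use the refinement $\age \sbtwn_\sch \bge \Rightarrow \age \swthn_\sch \bge$ to turn a between-group violation into a within-group one. The only cosmetic difference is that you state the fairness invariant and the refinement step explicitly up front, whereas the paper traces the same contradiction through the timing of proposals and rejections.
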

	
	\subsection{Removing within-group blocks: The Top Trading Cycles Algorithm}\label{sec:subTTC}
	In this subsection, I present the second stage of the DA-TTC: the {Top Trading Cycles algorithm} of \cite{shapley_cores_1974}.
	In the TTC, each student \say{owns} a seat at the school she is matched to in $\matone$.
	Students are allowed to trade the seats that they own with others, but only within priority groups.
	
	The way that I implement this is by first restricting which students and schools are active, and then restrict which schools a student can \say{point} to.
	At the start of the TTC, each student $\stu$ is  \textit{active} if she is in the lowest priority group of $\matone(\stu)$ among the students matched to $\matone(\stu)$ (i.e.\ for every $\stuj \in \matoneinv(\matone(\stu))$, $\stuj \wbtwn_{\matone(\stu)} \stu$).
	Throughout the TTC, each school $\sch$ is active if at least one student in $\matoneinv(\sch)$ is active (at the start of the TTC, every school such that $|\matoneinv(\sch)| \geq 1$ is active by construction).
	
	In every round of the TTC, for each student, construct the list of \textit{admissible} schools.
	A school $\sch$ is admissible to student $\stu$ if there is some active $\stuj \in \matoneinv(\sch)$ such that $\stu \ebtwn_\sch \stuj$.
	Each student points to her most-preferred admissible school, and each active school points to the highest-priority active student in $\matoneinv(\sch)$ (by definition, a school is only active if there is an active student in $\matoneinv(\sch)$, so this is well-defined).
	Because every active student points to an active school, and every active school points to an active student, a cycle must form.
	Every student on the cycle is assigned a seat at the school she points to and becomes inactive.
	The process is repeated until no students (and hence no schools) are active.
	\Cref{alg:TTC} provides the formal definition.
	
	\begin{algorithm}
		\caption{Top Trading Cycles Algorithm}\label{alg:TTC}
		\begin{algorithmic}
			\State initialize $\mattwo \gets \matone$
			\ForAll{$\stu \in \Stu$}  \Comment{activates students in the lowest-priority group at their assigned school}
				\State initialize $\stu$ as \textit{active} if $\stuj \wbtwn_{\matone(\stu)} \stu$ for every $\stuj \in \matoneinv(\matone(\stu))$
			\EndFor
			\For{$\sch \in \Sch$}
			\State initialize $\sch$ as \textit{active} if there is an active student in $\matoneinv(\sch)$
			\EndFor
			\While{there is an active student}
				\For{every active student $\stu$} \Comment{students point to most-preferred admissible school}
					\State $\stu$ points to her $\spref_\stu$-most preferred active school in $\{\sch  \spbr \t{exists active } \stuj\in \matoneinv(\sch) \st \stu \ebtwn_\sch \stuj\}$
				\EndFor
				\State each active school points to the highest-priority active student in $\matoneinv(\sch)$
				\State a cycle $\stu_1, \sch_1, \ldots, \stu_n, \sch_n$ exists
				\ForAll{$1\leq k\leq n$} \Comment{removes students in cycle from problem, then restarts}
					\State set $\mattwo(\stu_k)$ to $\sch_k$
					\State deactivate $\stu_k$
					\State if there are no active students in $\matoneinv(\sch_k)$, then deactivate $\sch$
				\EndFor
			\EndWhile
		\end{algorithmic}
	\end{algorithm}
	
	\begin{theorem}\label{thm:wthn-block}
		$\mattwo$ is in the unified core.
	\end{theorem}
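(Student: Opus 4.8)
Here is how I would prove \Cref{thm:wthn-block}. The target is to show $\mattwo$ is neither between-group blocked nor within-group blocked, and two facts relating $\mattwo$ to $\matone$ do most of the routine work, so I would record them first. \emph{Fact 1 ($\mattwo$ weakly Pareto dominates $\matone$).} A student who is never active keeps her $\matone$-assignment, and when an active $\stu$ is removed on a cycle she points to her $\spref_\stu$-best \emph{active admissible} school, while $\matone(\stu)$ stays active and admissible as long as $\stu$ is active (she herself certifies both), so $\mattwo(\stu)\wpref_\stu\matone(\stu)$. \emph{Fact 2 (for every $\sch$, $\mattwoinv(\sch)\ebtwn_\sch\matoneinv(\sch)$ and $|\mattwoinv(\sch)|=|\matoneinv(\sch)|$).} The members of $\matoneinv(\sch)$ that are ever active are exactly those in $\sch$'s bottom priority group $L_\sch$ among $\matoneinv(\sch)$, and this set only shrinks; each cycle through $\sch$ removes one such student and admits one, and the admitted student, pointing to $\sch$, must be $\ebtwn_\sch$ to an active member of $\matoneinv(\sch)$, hence also in $L_\sch$. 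Thus $\matoneinv(\sch)$ and $\mattwoinv(\sch)$ differ only inside $L_\sch$ and in equal number, and responsiveness gives the claim.

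For the between-group case: if $\Coal$ between-group enforces some $\nat$ over $\mattwo$ that $\Coal$ prefers to $\mattwo$, then by Fact 1 $\Coal$ prefers $\nat$ to $\matone$ too, and by Fact 2 (via a priority-preserving bijection $\mattwoinv(\sch)\to\matoneinv(\sch)$) a witness $\stuj\in\mattwoinv(\nat(\stu))$ with $\stu\sbtwn_{\nat(\stu)}\stuj$ yields a witness in $\matoneinv(\nat(\stu))$, and $|\mattwoinv(\nat(\stu))|<\quo_{\nat(\stu)}$ forces $|\matoneinv(\nat(\stu))|<\quo_{\nat(\stu)}$. Hence $\Coal$ between-group enforces $\nat$ over $\matone$, so $\matone$ is between-group blocked, contradicting \Cref{lem:btwn-block}.

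For the within-group case, suppose $\Coal$ within-group enforces $\nat$ over $\mattwo$ with $\Coal$ preferring $\nat$, and let $\Coal^+=\{\stu\in\Coal:\nat(\stu)\spref_\stu\mattwo(\stu)\}\neq\varnothing$. The backbone is a structural claim for $\stu\in\Coal^+$ with $\sch:=\nat(\stu)$: \emph{(i)} $\sch$ is full under $\matone$; \emph{(ii)} every member of $\matoneinv(\sch)$ is strictly above $\stu$ in $\swthn_\sch$; \emph{(iii)} $[\stu]_\sch=L_\sch$; \emph{(iv)} $\Upp_\sch(\stu)\subseteq\Coal$. Statements (i)--(ii) are the absence of justified envy in the fair match $\matone$, using $\sch\spref_\stu\matone(\stu)$ (from Fact 1); (iii) follows since Fact 2 and the equivalence condition of \Cref{def:wthn-enfo} make $\natinv(\sch),\mattwoinv(\sch),\matoneinv(\sch)$ pairwise $\ebtwn_\sch$-equivalent, so $[\stu]_\sch$ occurs in $\matoneinv(\sch)$ and, were it not the bottom group, some $y\in\matoneinv(\sch)$ would have $\stu\sbtwn_\sch y$, hence $\stu\swthn_\sch y$ by the refinement hypothesis on $\swthn_\sch$, contradicting (ii); and (iv) holds because that equivalence makes $\sch$ full under $\nat$, so the interrupter condition of \Cref{def:wthn-enfo} forces zero within-group interrupters at $\sch$. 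From (ii)--(iv), $M_\stu:=\matoneinv(\sch)\cap[\stu]_\sch$ lies in $\Upp_\sch(\stu)\subseteq\Coal$, consists of active students, and has the same size $g$ as $\natinv(\sch)\cap[\stu]_\sch$, with $\stu$ in the latter but not in $M_\stu$; so $\nat$ moves some member of $M_\stu$ off $\sch$.

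I would finish with a minimal-counterexample argument along \Cref{alg:TTC}. An active $\stu\in\Coal^+$ must see $\nat(\stu)$ go inactive strictly before she is removed (else it is admissible to her and she obtains it or better), so all of $M_\stu$ — which is in $\Coal$ and active — is removed first; picking the earliest-removed active $\stu^*\in\Coal^+$, minimality forces $\nat=\mattwo$ on $M_{\stu^*}$, and counting the priority-$[\stu^*]_{\sch^*}$ seats of $\sch^*:=\nat(\stu^*)$ under $\mattwo$ against those under $\nat$ surfaces a student $a_0\notin\Coal$ occupying such a seat under $\mattwo$; the interrupter condition of \Cref{def:wthn-enfo} at $\sch^*$ then forces $a_0$ strictly below $\stu^*$ in $\swthn_{\sch^*}$. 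Since $\nat$ must evict $a_0$ from $\sch^*$ to seat $\stu^*$, $a_0$ needs a weakly better home under $\nat$; following the resulting chain of forced relocations of students outside $\Coal$ to its (finite) end, the last such student, by the fairness of $\matone$, outranks a $\Coal$-member at the school she lands on — so she is a within-group interrupter there with no slack to absorb her, contradicting \Cref{def:wthn-enfo}. The residual case in which $\Coal^+$ has no active student is handled by running the same freed-seat chain from any inactive member of $\Coal^+$, which either surfaces an active member of $\Coal^+$ or forces a between-priority cycle the refinement hypothesis forbids. This last step is the only genuinely delicate point: because the interrupter condition carries slack and a blocking coalition can mix active and inactive students, the descent cannot simply peel off TTC cycles as in a textbook core argument, but must be organized around the earliest school vacated by $\Coal^+$ and must track the cascade of evictions $\nat$ forces outside $\Coal$ until an interrupter count overflows.
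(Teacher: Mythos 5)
Your setup is sound and tracks the paper's own lemmas closely: Fact 1 and Fact 2 are exactly what the paper proves (and uses in its \Cref{lem:trans-block}), your between-group argument is the paper's, and your structural claims (i)--(iv) reproduce the content of \Cref{lem:capacity}, the no-justified-envy property of $\matone$, \Cref{lem:ebtwn}, and the zero-interrupter observation. The problem is the finish. The ``chain of forced relocations of students outside $\Coal$'' does not exist: \Cref{def:wthn-enfo} constrains $\natinv(\sch)$ only for schools $\sch\in\nat(\Coal)$, and the preference condition in the definition of blocking applies only to members of $\Coal$. Nothing requires the displaced student $a_0\notin\Coal$ to receive ``a weakly better home under $\nat$''---$\nat$ may leave her unmatched or park her at any school outside $\nat(\Coal)$ without violating a single condition. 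So your chain terminates at its first link and produces no interrupter-count overflow. And the existence of $a_0$ alone is not contradictory: a low-within-priority non-coalition student sitting in a seat that $\stu^*$ claims is precisely the configuration the definition permits. The contradiction has to come from the TTC dynamics, i.e.\ from explaining why $a_0$ rather than $\stu^*$ won that seat during the algorithm, and that explanation is inherently global: it runs through the chain of \emph{coalition members'} owned seats, not through evicted outsiders.

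This is exactly what the paper's proof supplies and your argument lacks. For each $\sch\in\nat(\Coal)$ the paper builds an injection $\Topt_\sch:\natinv(\sch)\cap\Coal\to\matoneinv(\sch)\cap\Coal$ (identity on students already seated at $\sch$ under $\matone$; for the newcomers, the lowest members of $\matoneinv(\sch)\cap[\stu_m]_\sch$, who lie in $\Upp_\sch(\stu_m)\subseteq\Coal$ by your own claim (iv)). Iterating $\stu_{k+1}=\Topt_{\nat(\stu_k)}(\stu_k)$ from a strictly improving student yields a cycle of \emph{active coalition members} with $\matone(\stu_{k+1})=\nat(\stu_k)$; each $\stu_k$ must then be deactivated weakly after $\stu_{k+1}$, with one strict inequality, which is impossible around a cycle. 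Your ``earliest-removed active member of $\Coal^+$'' device cannot substitute for this because it never leaves the single school $\sch^*$ via coalition members, and your residual case (no active member of $\Coal^+$) leans on the same invalid eviction chain. To repair the proof you would need to replace the outside-$\Coal$ cascade with the coalition-internal ownership cycle (or an equivalent device) and close the loop there.
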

	The full proof of \Cref{thm:wthn-block} is available in the Appendix.
	Here, I outline the key points in the argument.
	A preliminary step is to show that $\mattwo$ is not between-group blocked; given that $\mattwo$ is a Pareto improvement of $\matone$ for the students but is equivalent under the between-priorities, this is not difficult to show.\footnote{But the importance of this cannot be overstated; as I discuss in \Cref{sec:disc}, having to find a Pareto improvement of $\matone$ is has restricted the kinds of algorithms available in school choice.}
	The rest of the proof supposes that $\Coal$ within-group blocks $\mattwo$ with $\nat$, and then finds a contradiction.
	
	The major difficulty with applying a standard proof that the TTC is in the core is that, for a school $\sch$ in $\nat(\Coal)$, not every student in $\natinv(\sch)$ is necessarily in $\Coal$.
	Put differently, using the TTC usually requires a well-defined set of owners, which the unified core lacks.
	The crux of the proof is a construction of a bijection $\Topt_\sch$ between students in $\natinv(\sch) \cap \Coal$ and students in $\mattwoinv(\sch) \cap \Coal$, which identifies an owner for each relevant seat.
	I construct $\Topt_\sch$ by addressing students in both $\natinv(\sch) \cap \Coal$ and $\mattwoinv(\sch) \cap \Coal$ separately from those in only $\natinv(\sch) \cap \Coal$.
	For those in both, I make $\Topt_\sch$ the identity function.
	For those in only $\natinv(\sch) \cap \Coal$, I leverage the $\natinv(\sch) \ebtwn_\sch \mattwoinv(\sch)$ of within-group enforcement requirement to find an equally-sized set of students in $(\mattwoinv(\sch) \cap \Coal )\backslash \natinv(\sch)$.
	I further show that if $\stu$ is active in the TTC, then $\Topt_{\nat(\stu)}(\stu)$ is also active in the TTC and $\Topt_{\nat(\stu)}(\stu) \in \Upp_{\nat(\stu)}(\stu)$.
	Hence, for active students, the $\Topt_\sch$ essentially identifies which student owns which seat at $\sch$.
		
	With the bijection $\Topt_\sch$ in hand, I then can use standard methods.
	I construct a cycle of students $\stu_1, \ldots \stu_n$ in $\Coal$ where $\nat(\stu_k) = \mattwo(\stu_{k+1})$ as in most proofs involving the TTC.
	I construct it such that least one student in this cycle must strictly prefer $\nat$ to $\mattwo$.
	But I also show that every $\stu_k$ is active in the Top Trading Cycles algorithm, so each must be deactivated only after the next student.
	This then leads to a situation in which $\stu_k$ must be deactivated strictly before $\stu_k$ is, a contradiction.

	\section{Discussion}\label{sec:disc}
	In this section, I place \Cref{thm:wthn-block} and the DA-TTC in conversation with existing results and point toward several avenues for future research.
	I first compare this result with that of EE, highlighting the complexities arising from the within-group blocks.
	I then turn to the question of finding a constrained efficient match.
	Finally, I turn to a more general discussion of the DA and TTC in school choice, and highlight the difficulties present in bridging these two algorithms.
	
	With the proof of the non-emptiness of the unified core in hand, I turn to a discussion of the differences between the DA-TTC and the Stable Improvement Cycles algorithm (DA-SIC) of EE.
	For context, EE considers the matches that are not between-group blocked.
	The DA-SIC starts by running the DA using the within-group priorities to establish a baseline match that is not between-group blocked.
	The SIC then checks for a cycle of students who each prefer the next student's match such that, if the students exchange seats, no between-group blocks are created; this is the eponymous \textit{stable} improvement cycle.
	The output of the SIC stage is a student-optimal match among those that are not between-group blocked.
	The critical difference between the SIC and the TTC is that the SIC allows for trades \textit{across} priority groups while the TTC does not.
	The following example with three students and two schools (each with one seat) illustrates this:
		\begin{multicols}{4}
			\centering
			\begin{tabular}{C | C | C}
				\spref_1 & \spref_2 & \spref_3 \\
				\hline
				a & b & a \\
				b & a & b \\
			\end{tabular}
			\\
			\begin{tabular}{C | C }
				\sbtwn_a & \sbtwn_b \\
				\hline
				2 & 1   \\
				1,3 & 2,3  \\
			\end{tabular}
			\\
			\begin{tabular}{C | C   }
				\swthn_a & \swthn_b \\
				\hline
				2 & 1   \\
				3 & 3  \\
				1 & 2  \\
			\end{tabular}
			\\
			\begin{align*}
				\mat^{\t{DA}}&= 
				\begin{pmatrix}
					1 & 2 & 3  \\
					b & a & 3
				\end{pmatrix}
			\end{align*}
		\end{multicols}
	At the match $\mat^{\t{DA}}$, EE identifies a stable improvement cycle of $1 \rightarrow a$ and $ 2 \rightarrow b $.
	However, allowing this pair to trade schools violates student $3$'s within-priority at both schools: after the trade, the coalition of $\{3\}$ can within enforce a match to either $a$ or $b$.
	In effect, the stable improvement cycle is stable with respect to $\wbtwn$, but disregards $\swthn$.\footnote{There is a second difference between the DA-TTC and the DA-SIC. Consider a school choice problem with one school (with one seat) and two students, both of whom desire the seat at the school. If the students are in the same priority group, then the DA-SIC could assign either student to the school. The DA-TTC, however, is constrained to only assign the higher-priority student. Put differently, the DA-TTC can emulate the TTC because the DA-TTC makes use of the within-group priority; the DA-SIC cannot.}
	This allows for these trades between priority groups.
	This difference is not a mere technical nuance; in the case of BPS, it is precisely trades like this that concern the policymaker.
	
	Thus far, I have dealt only with existence rather than constrained-efficiency.
	The major drive of EE is to find a student-optimal match subject to the stability constraints, which they show can be found by iteratively eliminating stable improvement cycles.
	Given the similarity between the DA-TTC and the DA-SIC, the natural question arises as to whether an analogous method can be used to eliminate \say{unified} stable improvement cycles.
	Unfortunately, this approach seems doomed to failure.
	Although for the between-priority the stable improvement cycle can be imported as-is, checking whether a cycle respects within-priority is significantly more complex.
	The crux is that a student's within-priority can be violated when a higher-priority student trades her seat to a lower-priority agent.
	However, in a cycle, whether such a \say{trade} can be found depends upon the matches of students outside the cycle.
	Put differently, a stable improvement cycle may involve trades across priority groups, but then a trade {within} the priority group must be found to protect the former trade.
	The problem is compounded by the possibility that the latter trade may itself be the result of another simultaneous trade.
	Whether a constrained efficient match can be found using this technique or others is an open question.
	
	There are several open questions about algorithms similar to the DA-TTC.
	The DA-TTC is a part of a growing set of algorithms that can be roughly described as \say{DA baseline, then trade to improve.}
	Starting with \cite{abdulkadiroglu_school_2003}, several authors have used this formula, such as EE, \cite{kesten_school_2010}, and \cite{doe_matching_2024}.
	The difficulty with combining these two families of algorithms is maintaining monotonicity on a set of matches.
	\cite{echenique_theory_2004} shows how the DA can be viewed as a monotonic function on matches; however, the TTC does not share this property \citep{echenique_stable_2024}.
	When combining these two algorithms, care must be taken to guarantee that the trades made in the TTC do not upset the match found in the DA, which is why the previous papers allow students only to trade the school they received from the DA rather than any school they have the highest-priority for.
	Going the other way around is tougher: if the designer first allows students to trade schools and then attempts to remove fairness violations, a student may trade for a school that is taken away; with one student-school pair removed from the group of trading students, the outcome of the TTC shifts unpredictably.
	The only paper to tackle both simultaneously is \cite{abdulkadirog_generalized_2011}.
	Is this approach to connecting the DA and TTC the only viable option?
	Can conditions be placed upon priorities to ameliorate the non-monotonicity of the TTC in the presence of the DA?
	
	Another obstacle to implementing the TTC is the complexity of the algorithm.
	Payzant alludes to this; after his previous quote, he continued: \say{Moreover, Top Trading Cycles is less transparent -- and therefore more difficult to explain to parents -- because of the trading feature executed by the algorithm, which may perpetuate the need or perceived need to \say{game the system.}}
	Concerns like Payzant's have spurred a growing body of literature to understand the complexity of the TTC \citep{gonczarowski_structural_2024, leshno_simple_2017}.
	In this paper, I instead focused on properties of the match rather than the process used to reach the match.
	The properties of a match may prove more transparent to stakeholders than the properties of an algorithm.

	I close with a brief note about the DA and TTC.
	Within the literature on school choice, the vast majority of algorithms stem from the DA and the TTC.
	The monotonicity of the DA is well-suited for the lattice structure of the stable matches, and the TTC flexibly handles ownership economies \citep{gusfield1989stable,papai_strategyproof_2000}.
	However, as I note in the previous paragraphs, these algorithms are difficult to combine.
	Additionally, given the complexity of the TTC and how few districts have attempted to implement it, relying on TTC-style algorithms to improve the efficiency of the DA may not be successful in applications.
	Developing general algorithms---as in \cite{abdulkadirog_generalized_2011}---to probe the efficiency-fairness frontier is a research direction of primary importance.

	\bibliography{sampleOutline}

	\appendix
	\setcounter{lemma}{0}
	\renewcommand{\thelemma}{\Alph{section}.\arabic{lemma}}
	\setcounter{definition}{0}
	\renewcommand{\thedefinition}{\Alph{section}.\arabic{definition}}
	\section{Omitted Proofs}\label{appendix}
	
	\begin{lemma}\label{lem:IR}
		Every student weakly prefers $\matone$ to being unmatched.
	\end{lemma}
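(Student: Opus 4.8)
The plan is to invoke the classical individual-rationality argument for the Deferred Acceptance algorithm, translated into the notation of \Cref{alg:DA}. First I would record the basic invariant of the algorithm: in every round, a student $\stu$ points to the $\wpref_\stu$-highest agent among those that have not yet rejected $\stu$. Second, I would observe that $\stu$ herself is always among the agents that have not rejected $\stu$: rejections in \Cref{alg:DA} are performed only by schools when they are oversubscribed, and a student never rejects anyone (in particular, not herself). Hence the set of agents that $\stu$ may point to shrinks only by schools and always retains the option of $\stu$ (i.e.\ of remaining unmatched).

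Combining these two points, at every round the agent that $\stu$ points to is $\wpref_\stu$-weakly preferred to $\stu$, simply because $\stu$ is one of the candidates over which the $\wpref_\stu$-maximum is taken. Since $\matone(\stu)$ is defined as the agent $\stu$ points to in the final round, it follows that $\matone(\stu) \wpref_\stu \stu$ for every $\stu \in \Stu$, which is exactly the claim.

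I do not expect any genuine obstacle here; the lemma is a routine restatement of the standard property of the DA, and the only thing to check carefully is the invariant in the second step — that ``pointing to oneself'' is a feasible option in every round — which is immediate from the structure of \Cref{alg:DA}. This lemma then serves downstream as the base case for individual rationality: $\matone$ is individually rational, and since $\mattwo$ only Pareto-improves on $\matone$ for the students, $\mattwo$ is individually rational as well, a fact used in showing $\mattwo$ is not between-group blocked in the proof of \Cref{thm:wthn-block}.
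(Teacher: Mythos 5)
Your proposal is correct and is essentially the paper's own argument: the paper's proof also rests on the single observation that a student never rejects herself, so the option of remaining unmatched is always available when she points to her $\wpref_\stu$-best remaining agent. You simply spell out the invariant more explicitly than the paper does.
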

	\begin{proof}
		Observe that no student rejects a proposal from herself (because students do not make rejections).
		This implies that no student prefers being unmatched more than $\matone$.
	\end{proof}
	\begin{proof}[\unskip\nopunct]
		\textbf{\textit{Proof of \Cref{lem:btwn-block}:}}
		
		Suppose (toward a contradiction) that $\matone$ is between-group blocked by a coalition $\Coal$ between-enforcing a match $\nat$ which it prefers.
		
		First, by \cref{lem:IR}, there is some school in $\matone(\Coal)$.
		
		Second, observe that a school only rejects students if it receives more proposals (cumulatively) than its capacity $\quo_\sch$.
		Therefore, if $\stu$ strictly prefers a school to $\matone(\stu)$, then that school is filled to capacity at $\matone$.
		
		The previous two points imply the existence of a school $\sch$ and two students $\stu$ and $\stuj$ with the following properties:
		\begin{enumerate}
			\item $\sch \in \nat(\Coal)$; and
			\item $\stu \in \Coal \cap \natinv(\sch) $ but $\stu \notin \matoneinv(\sch)$; and
			\item $\stuj\in \matoneinv(\sch)$ but $\stuj\notin \natinv(\sch)$; and 
			\item $\stu \sbtwn_\sch\stuj$.
		\end{enumerate}
		But then notice that $\stu$ must have proposed to $\sch$ but have been rejected.
		However, $\stuj$ must have proposed to $\sch$ but \textit{not} have been rejected.
		If $\stuj $ proposed to $\sch$ before $\sch$ rejects $\stu$, then $\sch$ rejects $\stuj$ before rejecting $\stu$, a contradiction.
		If $\stuj$ proposed to $\sch$ after $\sch$ rejects $\stu$, then $\sch$ must have rejected $\stu$ in favor of a higher within-group priority student $\stu^*$; hence, $\stuj$ is also rejected, a contradiction.
		
		Therefore, there is no coalition which between-group blocks $\matone$.
	\end{proof}
	\begin{lemma}\label{lem:ebtwn}
		If $\Coal \ebtwn_\sch \Coal' $, then for every $\stu\in \Stu$: $|[\stu]_\sch \cap \Coal | = |[\stu]_\sch \cap \Coal ' |$ 
	\end{lemma}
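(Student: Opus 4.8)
The plan is to translate the set-comparison $\Coal \ebtwn_\sch \Coal'$ into equalities of cumulative counts over the priority groups of $\sch$. Two preliminary remarks set this up. First, $\Coal \ebtwn_\sch \Coal'$ means $\Coal \wbtwn_\sch \Coal'$ \emph{and} $\Coal' \wbtwn_\sch \Coal$; since the clause in the definition of $\wbtwn_\sch$ for unequally-sized groups is one-directional (only a strictly larger group can weakly dominate a smaller one), holding in both directions forces $|\Coal| = |\Coal'|$, so only the equal-size clause is in play. Second, the priority groups at $\sch$ --- the $\ebtwn_\sch$-classes, each of which is of the form $[\stu]_\sch$ --- are totally ordered by $\sbtwn_\sch$; list them as $G_1, \dots, G_m$ from highest to lowest priority. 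For $0 \le j \le m$ set $N_j = |\Coal \cap (G_1 \cup \cdots \cup G_j)|$ and $N'_j = |\Coal' \cap (G_1 \cup \cdots \cup G_j)|$, so $N_0 = N'_0 = 0$. It is enough to show $N_j = N'_j$ for all $j$: if $[\stu]_\sch = G_j$, then, since the $G$'s are disjoint, $|[\stu]_\sch \cap \Coal| = N_j - N_{j-1}$ and likewise for $\Coal'$, and ranging $\stu$ over $\Stu$ hits every group.

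The core step is the claim that $\Coal \wbtwn_\sch \Coal'$ already implies $N_j \ge N'_j$ for every $j$. By the equal-size clause there is a bijection $\phi : \Coal \to \Coal'$ with $\stu \wbtwn_\sch \phi(\stu)$ for every $\stu \in \Coal$. Fix $j$. If $\phi(\stu) \in G_1 \cup \cdots \cup G_j$, then $\phi(\stu)$ lies in one of the top $j$ groups; since $\stu$ has weakly higher priority than $\phi(\stu)$ and the groups are totally ordered, $\stu$ also lies in $G_1 \cup \cdots \cup G_j$. Hence $\{\stu \in \Coal : \phi(\stu) \in G_1 \cup \cdots \cup G_j\} \subseteq \Coal \cap (G_1 \cup \cdots \cup G_j)$; the left-hand set has cardinality $N'_j$ because $\phi$ is a bijection, so $N'_j \le N_j$. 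Applying the identical argument to the bijection witnessing $\Coal' \wbtwn_\sch \Coal$ gives $N_j \le N'_j$, hence $N_j = N'_j$, which completes the argument.

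I do not expect a genuine obstacle here --- the lemma is essentially bookkeeping. The one place to be careful is the direction of the inequality in the core step: a student who weakly dominates another sits in a weakly \emph{higher} priority group, so membership in an initial segment $G_1 \cup \cdots \cup G_j$ is inherited \emph{backward} along $\phi$ (from $\phi(\stu)$ to $\stu$), which is precisely what makes $N'_j \le N_j$ rather than the reverse. The only other subtlety is confirming that $\ebtwn_\sch$ between two sets forces them to have equal cardinality, which is what licenses invoking the equal-size clause and producing the bijections $\phi$ in the first place.
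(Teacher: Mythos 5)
Your proof is correct and rests on the same key observation as the paper's: the bijection witnessing $\Coal \wbtwn_\sch \Coal'$ must map each initial segment of the priority-group order into itself, so the group-by-group counts cannot differ. The only difference is packaging --- you argue directly via cumulative counts, using both directions of $\ebtwn_\sch$ symmetrically, whereas the paper argues by contradiction at the $\wbtwn_\sch$-highest imbalanced priority group --- and your preliminary check that $\Coal \ebtwn_\sch \Coal'$ forces $|\Coal| = |\Coal'|$ makes explicit a step the paper leaves implicit.
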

	\begin{proof}
		Suppose (toward a contradiction) that for some $\stu\in \Stu$: $|[\stu]_\sch \cap \Coal | \neq |[\stu]_\sch \cap \Coal ' |$.
		Without loss of generality, let $\stu$ be in the $\wbtwn$-highest priority group such that $|[\stu]_\sch \cap \Coal | \neq |[\stu]_\sch \cap \Coal ' |$.
		Again, without loss of generality, let $|[\stu]_\sch \cap \Coal | > |[\stu]_\sch \cap \Coal ' |$.
		Consider that $\Coal \ebtwn_\sch \Coal' $ implies $\Coal '  \wbtwn_\sch \Coal$.
		But then $\Coal ' $ and $\Coal$ cannot be indexed such that $\coal_k ' \wbtwn_\sch \coal_k $ because every student in a $\wbtwn$-more preferred priority group must be paired with a student in that same priority group and the students in the $[\stu]_\sch$ priority group are imbalanced.
		This is a contradiction.
	\end{proof}
	\begin{lemma}\label{lem:trans-block}
		The following statements are true:
		\begin{enumerate}
			\item 
			If $\Coal$ within-group blocks $\mattwo$ by within-enforcing $\nat$, then $\Coal$ within-group blocks $\matone$ by within-enforcing $\nat$ as well.
			\item 
			If $\Coal$ between-group blocks $\mattwo$ by between-enforcing $\nat$, then $\Coal$ within-group blocks $\matone$ by within-enforcing $\nat$ as well.
		\end{enumerate}
	\end{lemma}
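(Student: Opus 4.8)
The plan is to exploit two structural facts about the output of \Cref{alg:TTC} relative to $\matone$, both of which come from the way the trades are restricted. First, every student weakly prefers $\mattwo$ to $\matone$: an active student always retains her own school as an admissible option and points to her most preferred admissible school, while a student who is never active does not move. Second, for every school $\sch$, $\mattwoinv(\sch) \ebtwn_\sch \matoneinv(\sch)$: since the active students in $\matoneinv(\sch)$ always form a single priority class (no student is activated after initialization), each seat of $\sch$ that changes hands passes from a member of $\matoneinv(\sch)$ to another student of the same $\wbtwn_\sch$-class, so the multiset of priority classes occupying $\sch$ is unchanged. I also use that $\ebtwn_\sch$ is transitive on subsets of $\Stu$: by \Cref{lem:ebtwn} together with its easy converse (pair the students off class by class), $\Coal \ebtwn_\sch \Coal'$ holds precisely when $|[\stu]_\sch \cap \Coal| = |[\stu]_\sch \cap \Coal'|$ for every $\stu$, which is visibly transitive.

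For part~(1), suppose $\Coal$ within-group enforces $\nat$ over $\mattwo$ and prefers $\nat$ to $\mattwo$. Chaining $\nat(\stu) \wpref_\stu \mattwo(\stu)$ with $\mattwo(\stu) \wpref_\stu \matone(\stu)$ for each $\stu \in \Coal$, and keeping the strict comparison that $\mattwo$ already received, shows $\Coal$ prefers $\nat$ to $\matone$. For enforceability I verify the two conditions of \Cref{def:wthn-enfo} at each $\sch \in \nat(\Coal)$ with base match $\matone$ in place of $\mattwo$. Condition~2 makes no reference to the base match---the within-group interrupters of $\stu$ are the students of $\Upp_{\nat(\stu)}(\stu) \setminus \Coal$ and the bound is $\quo_\sch - |\natinv(\sch)|$---so it still holds. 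Condition~1 follows by chaining $\natinv(\sch) \ebtwn_\sch \mattwoinv(\sch)$ (given) with $\mattwoinv(\sch) \ebtwn_\sch \matoneinv(\sch)$ (second fact) and transitivity.

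For part~(2), suppose $\Coal$ between-group enforces $\nat$ over $\mattwo$ and prefers $\nat$ to $\mattwo$; as in part~(1), $\Coal$ prefers $\nat$ to $\matone$. I claim $\Coal$ in fact between-group enforces $\nat$ over $\matone$ as well. Fix $\stu \in \Coal$ and set $\sch = \nat(\stu)$. By the second fact and \Cref{lem:ebtwn}, $\matoneinv(\sch)$ and $\mattwoinv(\sch)$ meet every priority class in the same number of students, so $|\matoneinv(\sch)| = |\mattwoinv(\sch)|$ and $\mattwoinv(\sch)$ contains some $\stuj$ with $\stu \sbtwn_\sch \stuj$ if and only if $\matoneinv(\sch)$ does. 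Hence whichever clause of \Cref{def:btwn-enfo} held for $\stu$ against $\mattwo$---a free seat at $\sch$, or such a $\stuj$---also holds against $\matone$. Together with $\Coal$ preferring $\nat$ to $\matone$ this makes $\matone$ between-group blocked, contradicting \Cref{lem:btwn-block}. So the hypothesis of part~(2) is never satisfied and the implication holds; in effect part~(2) repackages the ``preliminary step'' that $\mattwo$ is not between-group blocked, and it reduces the proof of \Cref{thm:wthn-block} to ruling out within-group blocks of $\mattwo$, which part~(1) converts into a within-group block of $\matone$ by a coalition that moreover prefers $\nat$ to $\mattwo$.

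The parts are otherwise routine; the single delicate point is the per-student clause transfer in part~(2), and it is exactly the group-count characterization of $\ebtwn_\sch$ from \Cref{lem:ebtwn} that carries it. The remaining loose end is that I lean on the two structural facts about \Cref{alg:TTC} (Pareto improvement over $\matone$, and preservation of the priority-class multiset at each school); if they are not recorded earlier, each needs a brief induction over the rounds of the algorithm, using that the active students at any school always form a single priority class and that an active student's current school stays admissible to her.
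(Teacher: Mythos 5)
Your proof is correct and follows essentially the same route as the paper's: both transfer the preference comparison via the fact that $\mattwo$ weakly Pareto-improves on $\matone$, transfer condition 2 of within-group enforcement because $\Upp_\sch(\cdot)$ is match-independent, and transfer condition 1 (and the between-group clauses) via $\mattwoinv(\sch) \ebtwn_\sch \matoneinv(\sch)$. Your part (2) is in fact slightly more careful than the paper's---you do the clause transfer student-by-student using the counting characterization from \Cref{lem:ebtwn} and note that the hypothesis is ultimately vacuous by \Cref{lem:btwn-block}---but this is a refinement of the same argument, not a different one.
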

	\begin{proof}
		I show that $\Coal$ can within-group enforce or between-group enforce $\nat$ over $\matone$ (respectively).
		Because every student weakly prefers $\mattwo$ to $\matone$, it follows that $\Coal$ prefers $\nat$ to $\matone$ and hence within-group blocks or between-group blocks $\matone$ with $\nat$.
		
		For the first statement, suppose that $\Coal$ within-group blocks $\mattwo$ by within-enforcing $\nat$.
		I show that $\Coal$ can within-group enforce $\nat$ over $\matone$.
		To see this, notice that $\Upp_\sch(\stu)$ is independent of $\mattwo$.
		Hence, the sum of within-group interrupters across all students in $\natinv(\sch) \cap \Coal$ is the same for both $\mattwo$ and $\matone$.
		Additionally, by definition, for every $\sch\in \nat(\Coal)$: $\mattwoinv(\sch) \ebtwn_\sch \natinv(\sch)$.
		By the construction of $\mattwo$: $\mattwoinv(\sch) \ebtwn_\sch \matoneinv(\sch)$.
		Hence, $\nat(\sch) \ebtwn_\sch \matoneinv(\sch)$.
		Therefore, $\Coal$ can within-group enforce $\nat$ over $\matone$.
		
		For the second statement, suppose that $\Coal$ between-group blocks $\mattwo$ by within-enforcing $\nat$.
		I show that $\Coal$ can between-group enforce $\nat$ over $\matone$.
		To see this, notice that for every school $\sch$, $\mattwo \ebtwn_\sch \matone$.
		Hence, if $\natinv(\sch) \sbtwn_\sch \mattwo$, then $\natinv(\sch)\sbtwn_\sch \matone$.
		Therefore, $\Coal$ can between-group enforce $\nat$ over $\matone$.
		
		This completes the proof.
	\end{proof}
	\begin{lemma}\label{lem:capacity}
		If $\Coal$ within-group blocks $\mattwo$ with $\nat$, then $\stu \in \Coal$ strictly prefers $\nat(\stu)$ to $\mattwo(\sch)$ only if $\sch$ is filled to capacity at $\mattwo$.
	\end{lemma}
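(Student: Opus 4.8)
The plan is to derive the statement from two ingredients: that the Top Trading Cycles stage leaves every school's occupancy unchanged, and that the Deferred Acceptance stage is non-wasteful (a fact already recorded in the proof of \Cref{lem:btwn-block}). Fix a within-group block of $\mattwo$ by $\Coal$ with $\nat$, take $\stu\in\Coal$, and write $\sch:=\nat(\stu)$; under the hypothesis $\sch\spref_\stu\mattwo(\stu)$, the goal is to show $|\mattwoinv(\sch)|=\quo_\sch$.

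First I would establish the occupancy identity $|\mattwoinv(\sch')|=|\matoneinv(\sch')|$ for every school $\sch'$. In \Cref{alg:TTC}, when a cycle $\stu_1,\sch_1,\dots,\stu_n,\sch_n$ is processed, each $\sch_k$ points to a student $\stu_{k+1}\in\matoneinv(\sch_k)$ (indices cyclic), so $\sch_k$ is precisely the seat that $\stu_{k+1}$ brings into the cycle; after the reassignment $\stu_k$ holds $\sch_k$, and therefore the seats $\matone(\stu_1),\dots,\matone(\stu_n)$ are merely permuted among $\stu_1,\dots,\stu_n$. A student who is never placed in any cycle keeps her $\matone$-assignment. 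Hence $\mattwo$ allocates exactly the same multiset of seats as $\matone$, which gives the identity; the degenerate case $n=1$ (a student reassigned to her own seat) is consistent with this.

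Second, recall (as used in the proof of \Cref{lem:trans-block}) that $\mattwo$ weakly Pareto-dominates $\matone$ for the students, so $\mattwo(\stu)\wpref_\stu\matone(\stu)$; combined with $\sch\spref_\stu\mattwo(\stu)$ this gives $\sch\spref_\stu\matone(\stu)$. The second observation in the proof of \Cref{lem:btwn-block} --- that a school rejects a proposal in the Deferred Acceptance algorithm only after it has received more than $\quo_\sch$ proposals, so that any school a student strictly prefers to her $\matone$-assignment is filled to capacity at $\matone$ --- then yields $|\matoneinv(\sch)|=\quo_\sch$. Combining with the occupancy identity from the first step, $|\mattwoinv(\sch)|=|\matoneinv(\sch)|=\quo_\sch$, which is the claim.

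The only step needing genuine care is the occupancy identity: one must check that a school can lose an occupant only inside a cycle that also contains it, and that within that same cycle it simultaneously gains exactly one new occupant, so that no school's load drifts over the course of the algorithm. Everything else is assembled from facts already available in the Appendix.
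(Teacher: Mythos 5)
Your proposal is correct and uses the same three ingredients as the paper's own proof---the occupancy identity $|\mattwoinv(\sch)|=|\matoneinv(\sch)|$ from the cycle structure of the TTC stage, the non-wastefulness of the Deferred Acceptance stage, and the fact that $\mattwo$ weakly Pareto-dominates $\matone$---merely arranged as a direct argument rather than the paper's proof by contradiction. This is essentially the same approach, with your explicit justification of the occupancy identity being a slightly more careful rendering of what the paper asserts as ``by the construction of $\mattwo$.''
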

	\begin{proof}
		Suppose (toward a contradiction) that $\stu\in\Coal$ strictly prefers $\nat(\stu) \equiv \stu$ to $\mattwo(\sch)$ but $|\mattwoinv(\sch)| < \quo_\sch$.
		
		By \cref{lem:trans-block}, $\Coal$ within-group blocks $\matone$ with $\nat$.
		By the construction of $\mattwo$, $|\mattwoinv(\sch)| = |\matoneinv(\sch)|$.
		By the construction of $\matone$, $\sch$ only rejects $\stu$ if more than $|\quo_\sch|$ students point to $\sch$.
		But less than $\quo_\sch$ other students point at $\sch$, so $\sch$ does not reject $\stu$.
		Therefore, $\stu$ weakly prefers $\matone $ to $\stu$.
		This contradicts that $\stu$ strictly prefers $\sch$ to $\mattwo$.
		Therefore, $|\mattwoinv(\sch)| = \quo_\sch$.
	\end{proof}
	\begin{proof}[\unskip\nopunct]
		\textbf{\textit{Proof of \Cref{thm:wthn-block}:}}
		By \cref{lem:trans-block}, $\mattwo$ is not between-group blocked; otherwise, $\matone$ would be between blocked, a contradiction to \cref{lem:btwn-block}.
		
		Now consider within-group blocks.
		Suppose (toward a contradiction) that $\mattwo$ is within-group blocked by a coalition $\Coal$ within-enforcing a match $\nat$.
		
		\textbf{First}, for every school $\sch \in \nat(\Coal)$, I define an injection $\Topt_\sch: \natinv(\sch) \cap \Coal \rightarrow \matoneinv(\sch) \cap \Coal$ in the following manner.
		Let $\natinv(\sch) \cap \Coal = \{\stu_1, \stu_2, \dots, \stu_n\}$ be indexed such that $\stu_{k+1} \swthn_\sch \stu_{k}$ for every $k$.
		Notice that by the construction of $\matone$, if $\stu_k \in \matoneinv(\sch)$, then for $k ' > k$, $\stu_{k'} \in \matoneinv(\sch)$.
		Therefore, there is a unique $m$ (possibly taking the value of $0$ or $n$) such that $\stu_{m} \notin \matoneinv(\sch)$ but $\stu_{m + 1} \in \matoneinv(\sch)$.
		I define $\Topt_\sch$ piecewise based on the index $k$:
		\begin{itemize}
			\item For $m+1 \leq k \leq n$,
			let $\Topt_\sch(\stu_k) \equiv \stu_k$.
			
			Observe that $\Topt_\sch(\stu_k) \in \matoneinv(\sch)$ by the definition of $m$.
			Notice that $\Topt_\sch(\stu_k) \in \Coal$ because $\stu_k \in \natinv(\stu) \cap \Coal$.
			This piece of $\Topt_\sch$ is clearly injective.
			
			\item For $1 \leq k \leq m$,
			consider the following argument:
			
			Because $\Coal$ within-group enforces $\nat$ at $\mattwo$, it follows that $\natinv(\sch) \ebtwn_\sch \mattwoinv(\sch)$.
			By construction, $\matoneinv(\sch) \ebtwn_\sch \mattwoinv(\sch)$.
			By construction of $\matone$, every student in $\matoneinv(\sch)$ is strictly $\swthn_\sch$-preferred to every student in $\stu_1, \ldots, \stu_{m}$. 
			Hence by \cref{lem:ebtwn}, $\stu_1, \ldots, \stu_{m} \in [\stu_{m}]_\sch$.
			
			Again, because $\Coal$ within-group enforces $\nat$ over $\mattwo$, there are at most $\quo_\sch - |\natinv(\sch)|$ within-group interrupters for student $\stu_{m}$.
			Because $|\natinv(\sch)| = \quo_\sch$ by \cref{lem:capacity}, it follows that there are no within-group interrupters of $\stu_{m}$.
			Hence, $\Upp_\sch(\stu_m) \subseteq \Coal$.
			But every student in $\matoneinv(\sch)$ is $\swthn_\sch$-preferred to $\stu_m$.
			Thus, $\matoneinv(\sch) \cap [\stu_m]_\sch \subseteq \Upp_\sch(\stu_m)$.
			Therefore, $\matoneinv(\sch) \cap [\stu_m]_\sch \subseteq \Coal$.
			
			Finally, note that $|\matoneinv(\sch) \cap [\stu_m]_\sch | = |\natinv(\sch) \cap [\stu_m]_\sch|$ by \cref{lem:ebtwn}.
			Let $M$ be the greatest index such that $\stu_M \in [\stu_m]_\sch$, and note that $\{\stu_1, \ldots \stu_M\}\subseteq\natinv(\sch) \cap [\stu_m]_\sch$.
			But then I have
			\begin{align*}
				\Big|\matoneinv(\sch) \cap [\stu_m]_\sch \Big| \geq \Big|\{\stu_1, \ldots, \stu_{m}, \ldots \stu_M\} \Big|
			\end{align*}
			However, because $\{\stu_{m+1}, \ldots , \stu_{M}\} \subseteq \matoneinv(\sch)$, this can be rewritten:
			\begin{align*}
				\Big|\big(\matoneinv(\sch) \cap [\stu_m]_\sch\big) \backslash \{\stu_{m+1}, \ldots , \stu_{M}\}\Big| + \Big|\{\stu_{m+1}, \ldots , \stu_{M}\} \Big| \geq \Big|\{\stu_1, \ldots, \stu_{m}, \ldots \stu_M\} \Big|
			\end{align*}
			This further implies
			\begin{align}\label{equ:inject}
				\Big|\big(\matoneinv(\sch) \cap [\stu_m]_\sch\big) \backslash \{\stu_{m+1}, \ldots , \stu_{M}\}\Big| \geq \Big|\{\stu_1, \ldots, \stu_{m}\} \Big|
			\end{align}
			
			Define $\Topt_\sch(\stu_k) $ as the $k^\t{th}$ $\swthn_\sch$ least-preferred student in $\big(\matoneinv(\sch) \cap [\stu_m]_\sch\big) \backslash \{\stu_{m+1}, \ldots , \stu_{M}\}$.
			By \cref{equ:inject}, $\Topt_\sch(\stu_k)$ is well-defined and is an injection on this piece.
			Because $\Topt_\sch(\stu_k) \notin \{\stu_{m+1}, \ldots , \stu_{M}\}$, this piece of $\Topt_\sch$ has no overlap with the first piece.
			Finally, by the above argument, $\Topt_\sch(\stu_k) \in \Coal$.
		\end{itemize}
		Hence, there is a well-defined injection $\Topt_\sch: \natinv(\sch) \cap \Coal \rightarrow \matoneinv(\sch) \cap \Coal$.
		
		\textbf{Second}, iteratively construct the following sequence of students.
		Let $\stu_1$ be some student in $\Coal$ such that $\nat(\stu_1) \neq \mattwo(\stu_1)$.
		Because $\Coal$ prefers $\nat$ to $\mattwo$, such a student exists.
		Let $\stu_{k+1} \equiv \Topt_{\nat(\stu_{k})}(\stu_{k})$.
		
		\textbf{Third},	I claim that $\stu_{k}$ is well-defined, $\stu_{k} \in \Coal$, and $\matone(\stu_k) \neq \nat(\stu_k)$.
		I show this by induction.
		For the base case when $k = 1$, I make three observations:
		\begin{enumerate}
			\item That $\stu_1$ is well-defined is noted previously.
			\item $\stu_1 \in \Coal$ by definition.
			\item Because $\stu_1$ strictly prefers  $\nat(\stu_1)$ to $\mattwo(\stu_1)$ and $\stu_1$ weakly prefers $\mattwo$ to $\matone$ by construction, it follows that $\nat(\stu_1) \neq \matone(\stu_1)$.
		\end{enumerate}
		For the inductive step, suppose that every student with index less than $k$ is well-defined, is in $\Coal$, and is not matched to the same school in $\nat$ and $\matone$.
		I make three observations:
		\begin{enumerate}
			\item Note that $\nat(\stu_{k-1}) \in \nat(\Coal)$.
			Similarly, note that $\nat(\stu_{k-1}) \in \Sch$ because $\mattwo$ is individually rational.
			Hence, $\Topt_{\nat(\stu_{k-1})}$ is well-defined.
			Therefore $\stu_k$ is well-defined.
			\item $\stu_k \in \Coal$ because the domain of $\Topt_{\nat(\stu_{k-1})}$ is a subset of $\Coal$.
			\item Because $\nat(\stu_{k-1})\neq \mattwo(\stu_{k-1})$ and $\stu_{k-1}$ weakly prefers $\mattwo$ to $\matone$, it follows that $\nat(\stu_{k-1}) \neq \matone(\stu_{k-1})$.
			Thus, $\stu_{k-1} \notin \matoneinv(\nat(\stu_{k-1}))$.
			By the construction of $\Topt_{\nat(\stu_{k-1})}$, it follows that $\matone(\stu_k) = \nat(\stu_{k-1})$ and $\nat(\stu_k)\neq \nat(\stu_{k-1})$.
			Therefore, $\matone(\stu_k) \neq \nat(\stu_k)$.
		\end{enumerate}
		
		\textbf{Fourth}, because each $\Topt_{\nat(\stu_k)}$ is an injection and each student $\stu_k$ is matched to only one school in $\matone$, it follows that if $\stu_k \neq \stu_{l}$, then $\Topt_{\nat(\stu_k)} \neq \Topt_{\nat(\stu_l)}$.
		Because there are a finite number of students, it follows that there is a (minimal) index $n$ such that $\stu_1 = \stu_{n+1}$.
		
		\textbf{Fifth}, consider the cycle $\stu_1, \ldots, \stu_n$.
		Because $\Topt_{\nat(\stu_{k-1})}(\stu_{k-1})\neq \stu_k$, it follows that $\stuj \wbtwn_{\matone(\stu_k)}\stu_k$ for every $\stuj \in \matoneinv(\matone(\stu_k))$.
		Hence, every student in the cycle is active the Top Trading Cycles stage.
		Notice that each $\stu_k$ must be deactivated weakly later than $\stu_{k+1}$ because $\stu_k$ points to $\matone(\stu_{k+1})$.
		However, $\stu_1$ must be deactivated \textit{strictly} later than $\stu_2$ because $\nat(\stu_1)\neq \matone(\stu_2)$.
		Therefore, $\stu_1$ must be deactivated strictly before $\stu_1$, a contradiction.
		
		\textbf{Therefore}, there is no coalition that within-group blocks $\nat$.
	\end{proof}

\end{document}